\theoremstyle{definition}
\theoremstyle{plain}\newtheorem{theorem}{Theorem}
\theoremstyle{definition}\newtheorem{remark}{Remark}
\theoremstyle{definition}
\theoremstyle{plain}\newtheorem{proposition}[theorem]{Proposition}
\theoremstyle{plain}\newtheorem{corollary}{Corollary}
\theoremstyle{plain}\newtheorem{lemma}{Lemma}
\theoremstyle{plain}
\newcommand{\noindentalgorithmic}[1]{%
  \setlength{\ALG@thistlm}{0pt}%
  \begin{algorithmic}[1]
    #1
  \end{algorithmic}%
}
\begin{document}

\preprint{APS/123-QED}


\title{Universal Error Correction for Distributed Quantum Computing  }\thanks{$\dag$ issqdw@mail.sysu.edu.cn (Corresponding author's address)}

\author{Daowen Qiu$^{1,\dag}$, Ligang Xiao$^{1}$, Le Luo$^{2}$, Paulo Mateus$^{3}$}
\affiliation{
$^1$  School of Computer Science and Engineering, Sun Yat-sen University, Guangzhou 510006, China \\
$^2$ School of Physics and Astronomy, Sun Yat-sen University, Zhuhai 519082, China \\
$^3$ Instituto de Telecomunica\c{c}\~{o}es, Departamento de Matem\'{a}tica,
Instituto Superior T\'{e}cnico,  Av. Rovisco Pais 1049-001  Lisbon, Portugal
}




\begin{abstract}
In   distributed quantum computing, the final solution of a problem is usually achieved by catenating these partial solutions resulted from different computing nodes, but   intolerable errors likely yield in this catenation process. In this paper, we propose a universal error correction scheme  to reduce errors and obtain effective solutions. Then, we apply this error correction scheme to designing a distributed phase estimation algorithm that presents  a basic tool for studying distributed Shor's algorithm and distributed discrete logarithm algorithm as well as other distributed quantum algorithms. Our method may provide a universal strategy of error correction for a kind of distributed quantum computing.



\newtheorem{defi}{Definition}

\end{abstract}

\pacs{Valid PACS appear here}
\maketitle


\section{Introduction}\label{sec:introduction}

Quantum computing has been rapidly developing with impressive advantages over classical computing. However, in order to realize quantum algorithms in practice, medium or large scale general quantum computers are required. Currently it is still  difficult to implement such quantum computers. Therefore,  to advance the application of quantum algorithms in the NISQ era, we would consider to reduce the required qubits or other quantum resources for quantum computers.
	
	Distributed quantum computing is a computing method of combining distributed computing and quantum computing, and has been significantly studied (for example,[1-8]  and references therein). Its purpose is to solve problems by means of fusing multiple smaller quantum computers working together. Distributed quantum computing is usually used to reduce the resources required by each computer, including qubits, gate complexity, circuit depth and so on. Due to these potential benefits, distributed quantum algorithms have been designed in recent years  [9-21].
For example, in 2013, Beals et al. proposed an algorithm for parallel addressing quantum memory \cite{BBG2013}.	In 2018, Le Gall et al. studied quantum algorithms in the quantum CONGEST model \cite{LM2018}. 
In 2022, Qiu et al. proposed a distributed Grover's algorithm \cite{QLX2024}, and Tan et al. proposed a distributed quantum algorithm for Simon's problem \cite{TXQ2022}. There are many important contributions concerning distributed quantum computing and algorithms, but here we do not expound the details in these references [1-21].  In general, these distributed quantum algorithms can reduce quantum resources to some extent.
	

	If  a result outputted by a quantum computer for solving a problem is described by a bits string, then we may consider to use $k$-computing nodes (smaller scale) to get  $k$ substrings, respectively,  and by catenating these substrings we may obtain an appropriate solution for the original problem. However, if the procedure of catenating these substrings is straightforward without processing error correction appropriately, then it  usually leads to intolerable errors. Therefore, the aim in the paper is to analyze and establish a universal error correction scheme for a kind of distributed quantum computing.

Phase estimation algorithm plays an important role in Shor's algorithm (\cite{NC2000, Shor1994}), and other quantum algorithms \cite{HHL2009}. As an application of the proposed error correction algorithm, we  design a distributed phase estimation algorithm,  with the advantages of less qubits and quantum gates over centralized one. The designed distributed phase estimation algorithm likely provides a basic tool for further studying other distributed quantum algorithms, for example, distributed Shor's algorithm and distributed discrete logarithm algorithm as well as distributed HHL algorithm \cite{HHL2009}.

	The remainder of the paper is organized as follows. First, in Section II, we propose a kind of problems concerning distributed quantum computing and two potential schemes for error correction. Then in Section III, we present and prove the useful error correction scheme in detail. As an application, Section IV serves to apply the error correction scheme in Section III to designing a distributed phase estimation algorithm.  Finally, in Section V, we summarize the main results and mention potential problems for further study. 
	

\section{Formulation of Problems and Schemes}\label{SFPM}

In this paper, for any 0-1 string $x$, we use $l[x]$ to denote the length of  $x$, and use  $d[x]$ to denote its decimal number;  on the other hand, if $x$ is a decimal number, then we use $b[x]$ to represent its binary number correspondingly. 

For any $x\in\{0,1\}^n$ ($n\geq k$),  we use $P[x,k]$ and $S[x,k]$  to denote the prefix and suffix of $x$ with $k$ bits, respectively. 
For example, if $x=01100110$, then $l[x]=8$, $d[x]=102$, $P[x,3]=011$, $S[x,3]=110$. 

Let  $\{0,1\}^n$ be a distance space by defining its distance  $D_n$ as: for any $x,y\in\{0,1\}^n$, 
\begin{equation}
D_n(x,y)=\min\left(|d[x]-d[y]|, 2^n-|d[x]-d[y]|\right).
\end{equation}

Actually,   $D_n$ can be verified to satisfy the conditions as a distance later on.


Suppose that the solution or approximate solution of a problem can be described by a string $\omega=a_1a_2\cdots a_n\in\{0,1\}^n$, and an output  $\beta=b_1b_2\ldots b_n\in\{0,1\}^n$  from a quantum algorithm solving this problem satisfies 
\begin{equation}D_n(\omega,\beta)\leq 1.\end{equation} 
Then, by means of distributed quantum computing to solve this problem, we can consider two scenarios in the following. 

{\bf Scheme 1:}
We may divide $\omega$ into $k$ substrings, say $A_1=a_1a_2\cdots a_{i_1}$,  $A_2=a_{i_1+1}a_{i_1+2}\cdots a_{i_2}$, $\ldots$, $A_k=a_{i_{k-1}+1}a_{i_{k-1}+2}\cdots a_n$. That is, $a=A_1\circ A_2\circ \cdots \circ A_k$, where $\circ$ denotes ``catenation" operation, but we often omit $\circ$ and write $a=A_1A_2 \cdots A_k$ simply, if no confusion results. Then we
use $k$ computing nodes to estimate the $k$ substrings, and obtain $k$ substrings $S_1,S_2,\ldots, S_k$ with the same length as $A_1, A_2,\ldots,A_k$,  respectively,  satisfying 
\begin{equation}
D_{l[A_i]}(A_i,S_i)\leq 1,
\end{equation}
for $i=1,2,\ldots,k$.
 However, unfortunately, by using this scheme, we can not ensure that 
  \begin{equation}
D_n(\omega,S)\leq 1,
\end{equation}
  where $S=S_1S_2\ldots S_k$, but this is required  for an approximate solution. So, this scheme is straight but not feasible in general (e.g. \cite{LQL2017}). 

{\bf Scheme 2:} We also divide the  $\omega=a_1a_2\cdots a_n\in\{0,1\}^n$ into $k$ substrings,  say $A_1,A_2,\ldots,A_k$,  but these substrings are not the same as the above {\bf Scheme 1}, because we require there are certain overlaps between adjacent substrings. More specifically, 
 for a given $k_0$ (it is not longer than the length of each substring),  the suffix of $A_i$ with length $k_0$ is overlapped with the prefix of $A_{i+1}$ with the same length $k_0$,  that is, $S[A_i, k_0]=P[A_{i+1},k_0]$,
 $i=1,2,\ldots,k-1$. We use $k$ computing nodes denoted as $Q_1,\cdots,Q_k$ to estimate  $A_1,A_2,\ldots, A_k$, respectively.
 
 If the output of $Q_i$ is $S_i$, then it is also required to satisfy 
 \begin{equation}\label{LE}
D_{l[A_i]}(A_i,S_i)\leq 1,
\end{equation}
for $i=1,2,\ldots,k$. 
 
 Finally, in order to get the solution from $S_1,S_2,\ldots, S_k$,  we can utilize the bits with overlapped positions to correct $S_{k-1}, S_{k-2},\dots,S_1$ in sequence  one by one,  and by combining all corrected substrings appropriately we can achieve the solution $S'$ satisfying
  \begin{equation} \label{CR}
D_n(A,S')\leq 1,
\end{equation}
 where the last substring $S_k$ is fixed without being changed and it is used for correcting $S_{k-1}$ to obtain a new substring. With this new substring, $S_{k-2}$ is going to be corrected. In sequence, all substrings will be corrected, resulting in a final solution.  Of course, this technical process needs to be strictly formulated and proved in detail in the next section.

 \section{Error Correction} \label{SEC}
 
 
In this section, we continue to solve the error correction problem  raised in Section \ref{SFPM}. We first describe the ideas of designing this algorithm, and then give related notations and properties of bit Strings. Finally, we present the concrete algorithm and proof.

 \subsection{Basic Ideas}
 The basic ideas can be divided into the following steps: 
 
 $\textbf{Step 1: }$ Fix $S_k$ and select an element $c_1\in \{\pm 2,\pm 1,0\}$ such that the suffix of $S_{k-1}$ with length $k_0$ (without loss of generality, take $k_0=3$ in this paper) adding $c_1$ equals to the prefix of $S_k$ with length $3$. Then $c_1$ adding $S_{k-1}$ obtains a new substring with the same length as  $S_{k-1}$, and this new substring being catenated with
 the suffix of $S_k$ with length $l[S_k]-3$ 
 yields a new substring denoted by $S_{k-1}'$. (By the way, $k_0=2$ is not enough and it is easy to construct a counterexample.)

  $\textbf{Step 2: } $ Proceed to  select an element $c_2\in \{\pm 2,\pm 1,0\}$ such that the suffix of $S_{k-2}$ with length $3$ adding $c_2$ equals to the prefix of $S_{k-1}'$ with length $3$.  Then $c_2$ adding $S_{k-2}$ obtains a new string with the same length as  $S_{k-2}$, and this new substring being catenated with 
 the suffix of $S_{k-1}'$ with length $l[S_{k-1}']-3$ 
 yields a new substring denoted by $S_{k-2}'$.

  $\textbf{Step 3: } $ Continue to correct $S_{k-3},   S_{k-4}, \ldots,   S_{1}$ step-by step in this way,  and $S_{k-3}',   S_{k-4}', \ldots,   S_{1}'$  are obtained correspondingly, where $S_{1}'$ is exactly the final goal string $S'$ and satisfies Ineq. (\ref{CR}).

  The above steps can be formulated by an algorithm of error correction, but before presenting this algorithm we still need a number of notations and properties to prove the correction of algorithm.
  
\subsection{Notations and Properties of Bit Strings}
  
 For any positive integer $n$, we define an add operation ``$\boldsymbol{+}_n$" as follows.  
\begin{equation}
\boldsymbol{+}_n: \{0,1\}^n\times \{0,\pm1,\pm2,\ldots,\pm (2^n-1)\}\rightarrow \{0,1\}^n
\end{equation}
is defined as: 

 for any $(x,c)\in\{0,1\}^n\times \{0,\pm1,\pm2,\ldots,\pm 2^n-1\}$,
\begin{equation}
x\boldsymbol{+}_n c= b[(d[x]+c)\mod 2^n]
\end{equation}
where  the length of $b[(d[x]+c)\mod 2^n]$ is $n$ by adding some $0$ as its prefix if any.

  Concerning the above distance and operation ``$\boldsymbol{+}_n$", we have the following properties that are useful in this paper.

\begin{proposition}\label{d_t}
For any $x,y\in\{0,1\}^n$, we have: \\
{\rm (I)} $D_n(x,y)=\min\{|b|: x\boldsymbol{+}_n b=y\}$.\\
{\rm (II)} $D_n(x,y)$ is a distance on $\{0,1\}^n$.\\
{\rm (III)} For any $1\leq n_0<n$, if $D_n(x,y)<2^{n-n_0}$, then
\begin{equation}\label{d_t 1}
D_{n_0}(P[x,n_0], P[y, n_0])\leq 1.
\end{equation}
{\rm (IV)} For any $1\leq n_0<n$, if $D_n(x,y)\leq 1$, then
\begin{equation}\label{}
D_{n_0}(S[x,n_0], S[y, n_0])\leq 1.
\end{equation}

\end{proposition}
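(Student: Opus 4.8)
The plan is to push everything down to modular arithmetic on $\mathbb{Z}/2^n\mathbb{Z}$ via the identification $x\leftrightarrow d[x]$, and to prove (I) first since it is the right tool for (II) and (IV). For (I), observe that $x\boldsymbol{+}_n b=y$ is equivalent to $d[x]+b\equiv d[y]\pmod{2^n}$, i.e. $b\equiv\delta\pmod{2^n}$ with $\delta:=d[y]-d[x]\in\{-(2^n-1),\dots,2^n-1\}$. The two integers in this residue class nearest to $0$ are $\delta$ and $\delta-\operatorname{sgn}(\delta)\,2^n$, whose absolute values are $|\delta|$ and $2^n-|\delta|$; taking the smaller one gives $\min\{|b|:x\boldsymbol{+}_n b=y\}=\min(|\delta|,2^n-|\delta|)=D_n(x,y)$. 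In particular every value of $D_n$ is at most $2^{n-1}$, a fact I will reuse.

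For (II), nonnegativity is clear, and $D_n(x,y)=0$ forces $|d[x]-d[y]|\in\{0,2^n\}$; since $|d[x]-d[y]|\le 2^n-1$ this means $d[x]=d[y]$, hence $x=y$. Symmetry is immediate from $|d[x]-d[y]|=|d[y]-d[x]|$. For the triangle inequality I would argue through (I): choosing $b_1,b_2$ with $x\boldsymbol{+}_n b_1=y$, $y\boldsymbol{+}_n b_2=z$ realizing $D_n(x,y)=|b_1|$ and $D_n(y,z)=|b_2|$, one gets $x\boldsymbol{+}_n(b_1+b_2)=z$ because reduction mod $2^n$ is additive, and $|b_1+b_2|\le|b_1|+|b_2|\le 2^{n-1}+2^{n-1}=2^n$. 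If $|b_1+b_2|<2^n$ the increment is admissible and $D_n(x,z)\le|b_1+b_2|\le D_n(x,y)+D_n(y,z)$; if $|b_1+b_2|=2^n$ then $b_1+b_2\equiv 0$, so $x=z$ and the inequality is trivial.

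For (IV), I would again invoke (I) together with $2^{n_0}\mid 2^n$: if $D_n(x,y)\le 1$ then $x\boldsymbol{+}_n b=y$ for some $b$ with $|b|\le 1$, so $d[y]\equiv d[x]+b\pmod{2^n}$ and therefore also $\pmod{2^{n_0}}$. Since $d[S[x,n_0]]=d[x]\bmod 2^{n_0}$, this says exactly $S[x,n_0]\boldsymbol{+}_{n_0}b=S[y,n_0]$, and one more application of (I) yields $D_{n_0}(S[x,n_0],S[y,n_0])\le|b|\le 1$. For (III), I would instead use the prefix relation $d[P[x,n_0]]=\lfloor d[x]/2^{n-n_0}\rfloor$ and split on the definition of $D_n$. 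In the non-wrapping case $|d[x]-d[y]|<2^{n-n_0}$, the elementary bound "$0\le a,b$, $|a-b|<m\Rightarrow|\lfloor a/m\rfloor-\lfloor b/m\rfloor|\le 1$" gives $D_{n_0}(P[x,n_0],P[y,n_0])\le 1$ at once. In the wrapping case $2^n-|d[x]-d[y]|<2^{n-n_0}$, say $d[x]>d[y]$; then $d[x]>2^n-2^{n-n_0}=2^{n-n_0}(2^{n_0}-1)$ forces $\lfloor d[x]/2^{n-n_0}\rfloor=2^{n_0}-1$, while $d[y]<d[x]-(2^n-2^{n-n_0})<2^{n-n_0}$ forces $\lfloor d[y]/2^{n-n_0}\rfloor=0$, so $D_{n_0}(P[x,n_0],P[y,n_0])=D_{n_0}(b[2^{n_0}-1],b[0])=\min(2^{n_0}-1,1)=1$.

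The main obstacle is the wrapping subcase of (III): one must notice that the hypothesis $D_n(x,y)<2^{n-n_0}$ is precisely strong enough to pin the two prefixes to the opposite endpoints $0$ and $2^{n_0}-1$ of the cycle, which are themselves at circular distance $1$; a weaker assumption (e.g. $D_n(x,y)\le 2^{n-n_0}$) would allow the prefixes to differ by $2$, breaking the bound. Everything else is bookkeeping with floors, residues, and the two candidate representatives produced by (I).
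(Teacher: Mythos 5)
Your proof is correct, and for parts (I), (II) and (IV) it follows essentially the same route as the paper: identify strings with residues mod $2^n$, prove (I) by exhibiting the two admissible increments $\delta$ and $\delta-\operatorname{sgn}(\delta)2^n$, get the triangle inequality by composing minimal increments, and get (IV) by reducing the relation $d[x]+b\equiv d[y]\pmod{2^n}$ modulo $2^{n_0}$. In fact you are slightly more careful than the paper in (II): you handle the boundary case $|b_1+b_2|=2^n$, where the combined increment falls outside the declared domain of $\boldsymbol{+}_n$, which the paper silently skips. The only genuine divergence is (III): the paper keeps the single congruence $d[x]+b=d[y]+k2^n$ and ``divides out'' $2^{n-n_0}$, absorbing the remainders into a correction $b'$ with $|b'|\leq 1$ (so wrap-around and non-wrap-around are treated uniformly, at the price of a somewhat terse justification), whereas you split into the non-wrapping case, settled by the elementary floor estimate $|\lfloor a/m\rfloor-\lfloor b/m\rfloor|\leq 1$ when $|a-b|<m$, and the wrapping case, where you pin the two prefixes to the endpoints $0$ and $2^{n_0}-1$, which are at circular distance $1$. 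Your version is more explicit and elementary; the paper's is shorter and avoids the case analysis. One small quibble with your closing remark: the strict inequality is not actually needed — under the weaker hypothesis $D_n(x,y)\leq 2^{n-n_0}$ the same two-case argument still yields quotients differing by at most $1$ (non-wrapping, since $|(q_x-q_y)m|\leq m+(m-1)<2m$) or exactly the endpoint pair $\{0,2^{n_0}-1\}$ (wrapping), so the conclusion $D_{n_0}(P[x,n_0],P[y,n_0])\leq 1$ survives; this does not affect the validity of your proof of the stated proposition, only the claimed sharpness of the hypothesis.
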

  
  \begin{proof}

{\rm (I)} 
 Suppose $d[x]\leq d[y]$ and $\left |d[x]-d[y]\right |\leq 2^{n-1}$. Then $D_n(x,y)=d[y]-d[x]$ and $x\boldsymbol{+}_n (d[y]-d[x])=y$.   It is easy to check $d[y]-d[x]=\min\{|b|: x\boldsymbol{+}_n b=y\}$. The other cases are similar.

{\rm (II)} We only prove the triangle inequality. For any $x,y,z\in \{0,1\}^n$, denote $D_n(x,y)=|b|$, $D_n(x,z)=|b_1|$, $D_n(z,y)=|b_2|$. Then $x\boldsymbol{+}_n b_1=z$ and $z\boldsymbol{+}_n b_2=y$, and the two equalities result in $x\boldsymbol{+}_n (b_1+b_2)=y$. Consequently, $D_n(x,y)=|b|\leq  |b_1+b_2|\leq |b_1|+|b_2|$.

{\rm (III)} From the condition it follows that there exists $b$ with $|b|<2^{n-n_0}$ such that $d[x]+b=d[y]+k2^n$ for some integer $k$.  It is easy to check that  two sides  module  with  $2^{n-n_0}$ leads to 
\begin{equation}
d[P[x,n_0]]+b'=d[P[y, n_0]]
\end{equation}
for some $b'$ with $|b'|<2$, and consequently, we have $D_{n_0}(P[x,n_0], P[y, n_0])\leq |b'|\leq1$.

{\rm (IV)} 
From the condition it follows that there exists $b$ with $|b|\leq 1$ such that $d[x]+b=d[y]+k2^n$ for some integer $k$.  It is easy to check that  two sides  module  with  $2^{n_0}$ leads to 
\begin{equation}
(d[S[x,n_0]]+b)\mod 2^{n_0}=d[S[y, n_0]].
\end{equation}
So, $D_{n_0}(S[x,n_0], S[y, n_0])\leq 1$.

  \end{proof}
  
  \subsection{Error Correction Algorithm}

 In fact,  in {\bf Scheme 2}, without loss of generality, we can also take  $l[A_i]=N_0\geq 3=k_0$ for $i=1,2,\ldots, k-1$, and $N_0\geq l[A_k]=n-kN_0+(k-1)k_0$,  where $k$ depends on $n$. 
In order to show the reason of error correction elements $c_i\in\{\pm2,\pm1,0\}$, we need a corollary.

 \begin{corollary} Let 0-1 strings $A_i$ satisfy  $l[A_i]\geq 3$ for $i=1,2,\ldots, k$ and $S[A_i, 3]=P[A_{i+1},3]$,
 $i=1,2,\ldots,k-1$.   Suppose that 0-1 strings $S_i$ satisfy Ineq. (\ref{LE}), i.e., 
  \begin{equation}
D_{l[A_i]}(A_i,S_i)\leq 1,
\end{equation}
for $i=1,2,\ldots,k$. 
Then 
 \begin{equation}
D_{3}(S[S_i,3],P[S_{i+1},3])\leq 2,
\end{equation}
for $i=1,2,\ldots,k-1$. \end{corollary}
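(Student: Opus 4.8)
The plan is to prove the corollary by a direct application of the triangle inequality for $D_3$, combined with parts (III) and (IV) of Proposition~\ref{d_t}.  The key observation is that the overlap condition $S[A_i,3]=P[A_{i+1},3]$ means the two "target" length-$3$ blocks $S[A_i,3]$ and $P[A_{i+1},3]$ are literally equal, so the distance between $S[S_i,3]$ and $P[S_{i+1},3]$ can be bounded by routing through this common string.

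First I would fix $i\in\{1,\dots,k-1\}$ and write, by the triangle inequality (Proposition~\ref{d_t}(II)) for $D_3$,
\begin{equation}
D_3\bigl(S[S_i,3],P[S_{i+1},3]\bigr)\leq D_3\bigl(S[S_i,3],S[A_i,3]\bigr)+D_3\bigl(P[A_{i+1},3],P[S_{i+1},3]\bigr),
\end{equation}
where I have already used $S[A_i,3]=P[A_{i+1},3]$ to identify the intermediate point.  Next I would bound each of the two terms on the right by $1$.  For the first term, I apply Proposition~\ref{d_t}(IV) with $n=l[A_i]$, $n_0=3$, $x=S_i$, $y=A_i$: the hypothesis $D_{l[A_i]}(A_i,S_i)\leq 1$ is exactly Ineq.~(\ref{LE}), so $D_3(S[S_i,3],S[A_i,3])\leq 1$.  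For the second term, I apply Proposition~\ref{d_t}(III) with $n=l[A_{i+1}]$, $n_0=3$, $x=A_{i+1}$, $y=S_{i+1}$: since $D_{l[A_{i+1}]}(A_{i+1},S_{i+1})\leq 1<2^{\,l[A_{i+1}]-3}$ (using $l[A_{i+1}]\geq 3$; note that if $l[A_{i+1}]=3$ the prefix statement is trivial anyway since the strings then equal $S_{i+1}$ and $A_{i+1}$ and part (I) gives the bound directly), we get $D_3(P[A_{i+1},3],P[S_{i+1},3])\leq 1$.  Adding the two bounds gives $D_3(S[S_i,3],P[S_{i+1},3])\leq 2$, which is the claim.

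I do not expect any real obstacle here; the only point requiring a moment of care is the edge case $l[A_{i+1}]=3$, where the strict inequality $D_n(x,y)<2^{n-n_0}$ needed in part (III) reads $1<1$ and fails.  In that degenerate case, though, $P[A_{i+1},3]=A_{i+1}$ and $P[S_{i+1},3]=S_{i+1}$, so $D_3(P[A_{i+1},3],P[S_{i+1},3])=D_{l[A_{i+1}]}(A_{i+1},S_{i+1})\leq 1$ holds trivially, and the argument goes through unchanged.  (Under the normalization $l[A_i]=N_0\geq 3$ introduced just before the corollary, and with $N_0$ typically much larger than $3$, this subtlety is immaterial, but it is worth a one-line remark for completeness.)  A brief closing sentence would note that this bound of $2$ is precisely why the error-correction increments $c_i$ must be allowed to range over $\{\pm 2,\pm 1,0\}$ rather than a smaller set, tying the corollary back to Step~1 of the basic ideas.
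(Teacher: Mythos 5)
Your proposal is correct and follows essentially the same route as the paper: apply Proposition~\ref{d_t}(IV) to bound $D_3(S[S_i,3],S[A_i,3])\leq 1$, Proposition~\ref{d_t}(III) to bound $D_3(P[A_{i+1},3],P[S_{i+1},3])\leq 1$, and combine via the triangle inequality through the common string $S[A_i,3]=P[A_{i+1},3]$. Your extra remark handling the degenerate case $l[A_{i+1}]=3$ (where the strict hypothesis of part (III) fails but the claim is trivial) is a small point of care the paper's proof glosses over, but it does not change the argument.
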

 
 \begin{proof}
  
 By means of Proposition \ref{d_t} (III,IV), we have  
  \begin{equation}
D_{3}(S[S_i,3],S[A_{i},3])\leq 1,
\end{equation}
  \begin{equation}
D_{3}(P[A_{i+1},3],P[S_{i+1},3])\leq 1.
\end{equation}
 Due to Proposition \ref{d_t} (II)  (i.e. triangle inequality) and $S[A_i, 3]=P[A_{i+1},3]$, the corollary holds.
 
  \end{proof}

  Let $\omega\in\{0,1\}^n$ be divided into $k$ bit strings $A_i$ as Scheme 2. We present  the following error correction algorithm ({\bf Algorithm 1}). 

  \textbf{Input}: Bit strings $S_1,\cdots,S_k$, $A_1,\cdots,A_k$, with $l[S_i]=l[A_i]=N_0\geq 3$, $D_{l[A_i]}(A_i,S_i)\leq 1$,  $i=1,2,\cdots,k$, and  $S[A_i, 3]=P[A_{i+1},3]$, for $i=1,2,\cdots,k-1$.    
  
   \textbf{Output}: $S'$ satisfies $D_n(S',\omega)=D_{l[S_k]}(S_k,A_k)$.    
\begin{algorithmic}[1]
\State Set $S_k'=S_k$.
\For{$r=k-1$ to $1$}
	\State Select $c_r\in\{\pm 2,\pm 1,0\}$ such that
 $S[S_r,3]\boldsymbol{+}_3 c_r=P[S_{r+1}',3]$.  
	\State $S_r'\leftarrow (S_r\boldsymbol{+}_{l[S_r]}c_r) \circ S[S_{r+1}',l[S_{r+1}']-3]$ (``$\circ$" represents catenation, as mentioned above).  
\EndFor

\State Return $S_1'=S'$.
  
  \end{algorithmic}

In the light  of the above algorithm, 
we have the following theorem.






 \begin{theorem}\label{ERT}
Let  $\omega=a_1a_2\ldots a_n\in\{0,1\}^n$ be divided into $k$ substrings $A_1,A_2,\ldots,A_k$ in turn, with  $l[A_i]=N_0\geq 3$ for $i=1,2,\ldots,k$, and the suffix of $A_i$ with length 3  is overlapped with the prefix of $A_{i+1}$ (i.e., $S[A_i, 3]=P[A_{i+1},3]$), for $i=1,2,\ldots,k-1$.  Suppose that 0-1 strings $S_i$ satisfy Ineq. (\ref{LE}), i.e., 
  \begin{equation}
D_{l[A_i]}(S_i,A_i)\leq 1,
\end{equation}
for $i=1,2,\ldots,k$. 
Then {\bf Algorithm 1} 
outputs $S'$ satisfying 
 \begin{equation}
 D_n(S',\omega)=D_{l[S_k]}(S_k,A_k)\leq 1.
 \end{equation}

\end{theorem}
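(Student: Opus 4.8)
The plan is a downward induction on $r=k,k-1,\dots,1$ that records, modulo the relevant power of two, the exact signed offset between the partially corrected string $S_r'$ produced by {\bf Algorithm 1} and the corresponding tail of $\omega$. Write $\omega_r$ for the suffix of $\omega$ that starts where $A_r$ begins, and set $l_r:=l[\omega_r]$. Since $A_{r+1}$ starts $N_0-3$ positions after $A_r$, one has $A_r=P[\omega_r,N_0]$, $\omega_k=A_k$, $l_r=N_0+l_{r+1}-3$, and $\omega_r=P[A_r,N_0-3]\circ\omega_{r+1}$. On the algorithm side, the choice of $c_r$ makes the length-$3$ suffix of $S_r\boldsymbol{+}_{N_0}c_r$ equal to $P[S_{r+1}',3]$, so $l[S_r']=l_r$ and $S_r'=P[S_r\boldsymbol{+}_{N_0}c_r,\,N_0-3]\circ S_{r+1}'$. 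In particular $S_k=S_k'$ is the length-$N_0$ suffix of $S'=S_1'$ and $A_k$ is the length-$N_0$ suffix of $\omega$; since passing to a common-length suffix cannot increase the distance (reduce an optimal offset to its balanced residue modulo $2^{N_0}$; a one-line strengthening of Proposition~\ref{d_t}(IV)), we already get $D_n(S',\omega)\ge D_{N_0}(S_k,A_k)$, which is the ``$\ge$'' half of the claimed equality.

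For the ``$\le$'' half I would prove, by induction from $r=k$ down to $r=1$, the invariant: there is an integer $\nu$ with $|\nu|=\delta:=D_{N_0}(S_k,A_k)$ such that $d[S_r']\equiv d[\omega_r]+\nu\pmod{2^{l_r}}$. The case $r=k$ is just $D_{N_0}(S_k,A_k)=\delta$. For the inductive step, the two decompositions above give
\begin{equation}
d[S_r']-d[\omega_r]=\gamma\,2^{l_{r+1}}+\bigl(d[S_{r+1}']-d[\omega_{r+1}]\bigr),
\end{equation}
where $\gamma:=d[P[S_r\boldsymbol{+}_{N_0}c_r,\,N_0-3]]-d[P[A_r,N_0-3]]$, while the induction hypothesis writes $d[S_{r+1}']-d[\omega_{r+1}]=\nu+\mu\,2^{l_{r+1}}$ with $\mu\in\{-1,0,1\}$. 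As $2^{l_r}=2^{N_0-3}\cdot2^{l_{r+1}}$, the step therefore reduces to the claim
\begin{equation}
\gamma+\mu\in\{0,\, 2^{N_0-3},\, -2^{N_0-3}\}.
\end{equation}

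To prove this claim I would combine two estimates on the integer $d[S_r\boldsymbol{+}_{N_0}c_r]-d[A_r]$. First, $D_{N_0}(S_r,A_r)\le1$, $|c_r|\le2$ and the triangle inequality give $d[S_r\boldsymbol{+}_{N_0}c_r]\equiv d[A_r]+\phi\pmod{2^{N_0}}$ with $|\phi|\le3$. Second, the seam conditions $S[S_r\boldsymbol{+}_{N_0}c_r,3]=P[S_{r+1}',3]$ and $S[A_r,3]=P[\omega_{r+1},3]$, together with Proposition~\ref{d_t}(III) applied to $(S_{r+1}',\omega_{r+1})$ (whose distance is $\le1<2^{l_{r+1}-3}$, or directly when $l_{r+1}=3$), yield $D_3(P[S_{r+1}',3],P[\omega_{r+1},3])\le1$, i.e.\ $d[S_r\boldsymbol{+}_{N_0}c_r]\equiv d[A_r]+\theta\pmod{8}$ with $|\theta|\le1$. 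Since $|\phi|+|\theta|<8$, these force $\phi=\theta\in\{0,\pm1\}$; this is exactly where the overlap length $3$ (and not $2$) is needed, since with $2^2=4$ one could have $|\phi|+|\theta|\ge4$ and the conclusion — and indeed the theorem — fails. Hence $d[S_r\boldsymbol{+}_{N_0}c_r]$ and $d[A_r]$ differ by at most $1$, so $\gamma\ne0$ can occur only when this $\pm1$ shift crosses a multiple of $8$, i.e.\ only when $S[A_r,3]$ and $P[S_{r+1}',3]$ form the antipodal pair $\{000,111\}$. Combined with $D_{l_{r+1}}(S_{r+1}',\omega_{r+1})\le1$, that antipodal pattern pins $\{S_{r+1}',\omega_{r+1}\}$ down to the all-zero and all-one strings of length $l_{r+1}$, which fixes $\mu$ with precisely the sign needed for $\gamma+\mu=0$ (the values $\pm2^{N_0-3}$ arise only in the further degenerate case $d[A_r]\in\{0,2^{N_0}-1\}$). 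Conversely, $\mu\ne0$ already forces $\{S_{r+1}',\omega_{r+1}\}$ to be that pair, hence the antipodal seam, hence a compensating $\gamma$; in all cases $\gamma+\mu\in\{0,\pm2^{N_0-3}\}$. Taking $r=1$ then gives $d[S']\equiv d[\omega]+\nu\pmod{2^n}$ with $|\nu|=\delta$, so $D_n(S',\omega)\le\delta$; with the reverse inequality, $D_n(S',\omega)=D_{N_0}(S_k,A_k)$, which is $\le1$ by hypothesis.

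I expect the main obstacle to be exactly that carry-compensation claim — that $\gamma+\mu$ lies in $\{0,\pm2^{N_0-3}\}$, i.e.\ ruling out an uncompensated carry out of the top $N_0-3$ bits of $S_r\boldsymbol{+}_{N_0}c_r$. This is a genuinely delicate interplay between the correction $c_r$ added to $S_r$ and the wrap-around behaviour of the already-corrected suffix $S_{r+1}'$, and disentangling the degenerate all-zero/all-one configurations, the top-level wrap modulo $2^{N_0}$, and the single step in which $l_{r+1}=3$ (where Proposition~\ref{d_t}(III) must be replaced by the trivial identity $P[z,3]=z$) is where essentially all of the bookkeeping is concentrated.
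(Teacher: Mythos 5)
Your proposal is correct in substance and reaches the same global invariant as the paper --- both proofs in effect show that a single offset $\nu$ (the paper's $b_1$) with $|\nu|=D_{l[S_k]}(S_k,A_k)$ propagates unchanged through the backward recursion, so that $S_r'$ and the corresponding tail of $\omega$ differ by exactly $\nu$ modulo $2^{l_r}$ --- but the way you justify the inductive step is genuinely different. The paper proves the step via Lemma~\ref{CorrectStep} and a chain of catenation identities (e.g.\ $P[\cdot]\circ S_k\boldsymbol{+}_{l[S_{k-1}']}b_1=P[\cdot]\circ A_k$ and $P[\cdot]\circ P[A_k,3]=(P[\cdot]\circ P[S_k,3])\boldsymbol{+}_{N_0}b_2$), which implicitly assume that adding a small offset to a catenated string only alters the trailing block; as literally written those identities fail when the addition wraps (e.g.\ $S_k=00\cdots0$, $b_1=-1$), and the paper never addresses the carry across the seam --- the errors happen to cancel, which is exactly the content of your compensation claim. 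You instead work directly with the integers $d[\cdot]$, decompose the difference blockwise, and prove the carry-compensation statement $\gamma+\mu\in\{0,\pm2^{N_0-3}\}$ by showing that a wrap in the already-corrected suffix ($\mu\neq0$) forces $S_{r+1}'$ and $\omega_{r+1}$ to be the all-ones/all-zeros pair, hence an antipodal seam $\{000,111\}$ and a compensating prefix shift, and conversely; together with the $\phi=\theta$ argument (which is also where the necessity of overlap $3$ rather than $2$ becomes visible) this closes the step rigorously, and your separate lower bound via non-expansiveness of suffix-taking then upgrades the inequality to the claimed equality (the paper gets equality instead from uniqueness of offsets in $\{0,\pm1\}$, via Proposition~\ref{d_t}(I)). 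So your route buys an explicit treatment of the wrap-around cases that the paper's catenation manipulations gloss over, at the cost of a more delicate case analysis; the remaining work in your sketch is only to write out that case analysis (including $l_{r+1}=3$ and $d[A_r]\in\{0,2^{N_0}-1\}$) in full, and your identification of it as the crux is accurate.
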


In order to prove the theorem, we first need a lemma as follows.

\begin{lemma}\label{CorrectStep}
Let $A,S$ be two $t$-bit strings ($t\geq 3$). Let $y$ be a $3$-bit string. Suppose $D_t(S,A)\leq 1$ and $D_3(y,S[A,3])\leq 1$. Then:

 (1)  There is unique $b_0$ satisfying $S\boldsymbol{+}_t b_0=A$; and for any $b\in\{\pm 1,0\}$, $S\boldsymbol{+}_t b=A$ if and only if  $S[S,t_0]\boldsymbol{+}_{t_0} b=S[A,t_0]$, where $t_0\leq t$.
 
 (2) There exists unique  $b\in \{\pm 2,\pm 1,0\}$ such that \begin{equation} S[S,3]\boldsymbol{+}_3 b=y.\end{equation}
 
(3) If $S\boldsymbol{+}_tb_1=A$ and $S[A,3]\boldsymbol{+}_3b_2=y$ for some $b_1,b_2\in\{\pm1,0\}$, then  
\begin{equation}
b=b_1+b_2.
\end{equation}
\end{lemma}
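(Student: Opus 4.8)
The plan is to prove the three parts of Lemma~\ref{CorrectStep} more or less in the order stated, since each one feeds into the next, and then (implicitly) to note how they combine to give the inductive step of Theorem~\ref{ERT}.

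For part~(1), uniqueness of $b_0$ with $S\boldsymbol{+}_t b_0 = A$ follows from the definition of $\boldsymbol{+}_t$: the equation $d[S]+b_0\equiv d[A]\pmod{2^t}$ pins down $b_0$ once we agree to take it as the representative of minimal absolute value, and $D_t(S,A)\le 1$ guarantees that minimal representative lies in $\{\pm1,0\}$ by Proposition~\ref{d_t}(I). For the "if and only if" clause, the forward direction is exactly the computation already carried out in the proof of Proposition~\ref{d_t}(IV) (reducing $d[S]+b = d[A] + k2^t$ modulo $2^{t_0}$, using $|b|\le 1 < 2^{t_0}$ so no wraparound is introduced); the reverse direction needs the observation that if $S[S,t_0]\boldsymbol{+}_{t_0} b = S[A,t_0]$ with $b\in\{\pm1,0\}$, then together with $D_t(S,A)\le 1$ — which by part~(1)'s uniqueness already forces $S\boldsymbol{+}_t b_0 = A$ for a unique $b_0\in\{\pm1,0\}$ — the two suffixes agree only if $b_0 = b$, so I would argue by showing $b_0$ and $b$ induce the same suffix shift and invoking uniqueness of the minimal representative modulo $2^{t_0}$.

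For part~(2), this is essentially Proposition~\ref{d_t}(I) applied on $\{0,1\}^3$: the hypothesis $D_3(y,S[A,3])\le 1$ plus $D_3(S[S,3],S[A,3])\le 1$ (which follows from $D_t(S,A)\le 1$ via Proposition~\ref{d_t}(IV)) gives $D_3(S[S,3],y)\le 2$ by the triangle inequality Proposition~\ref{d_t}(II), so the minimal-absolute-value solution $b$ of $d[S[S,3]]+b\equiv d[y]\pmod 8$ satisfies $|b|\le 2$, hence $b\in\{\pm2,\pm1,0\}$; uniqueness is again just uniqueness of the minimal representative. One should double-check that $|b|\le 2$ genuinely forces membership in that five-element set and that there is no ambiguity at $|b|=4$ (the half-way point on $\{0,1\}^3$), but since $2<4$ this is fine.

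Part~(3) is where the real content sits and where I expect the main obstacle. Given $S\boldsymbol{+}_t b_1 = A$ with $b_1\in\{\pm1,0\}$, part~(1) with $t_0=3$ gives $S[S,3]\boldsymbol{+}_3 b_1 = S[A,3]$; composing with $S[A,3]\boldsymbol{+}_3 b_2 = y$ yields $S[S,3]\boldsymbol{+}_3 (b_1+b_2) = y$. Since $b_1,b_2\in\{\pm1,0\}$ we have $|b_1+b_2|\le 2$, so $b_1+b_2$ is itself a valid choice in $\{\pm2,\pm1,0\}$, and by the uniqueness in part~(2) it must coincide with $b$. The subtlety to be careful about is the composition law for $\boldsymbol{+}_3$: one needs $(S[S,3]\boldsymbol{+}_3 b_1)\boldsymbol{+}_3 b_2 = S[S,3]\boldsymbol{+}_3(b_1+b_2)$, which holds because addition modulo $8$ is associative — this is the same identity used in the proof of Proposition~\ref{d_t}(II) — but it should be stated explicitly since $b_1+b_2$ could a priori exceed the naive range; here $|b_1+b_2|\le 2 < 8$ so the map is still well-defined. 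Once part~(3) is in hand, the theorem follows by induction: each correction step replaces $S_r$ by $S_r\boldsymbol{+}_{l[S_r]} c_r$ and, by part~(3), $c_r$ is exactly the sum of the genuine error $b_1$ of $S_r$ against $A_r$ and the residual error $b_2$ inherited from the already-corrected $S_{r+1}'$, so the corrected prefix matches $A_r$ exactly on its first $N_0$ bits while the tail is the good tail of $S_{r+1}'$; propagating this down to $r=1$ leaves only the untouched error $D_{l[S_k]}(S_k,A_k)\le 1$ of the fixed last block.
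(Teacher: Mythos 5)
Your proposal is correct and follows essentially the same route as the paper's proof: part (1) via Proposition~\ref{d_t}(I) and the mod-$2^{t_0}$ reduction, part (2) via the triangle inequality giving $D_3(S[S,3],y)\le 2$ plus uniqueness of the representative in $\{\pm 2,\pm 1,0\}$, and part (3) by composing the two shifts (associativity of modular addition) and invoking the uniqueness of part (2), exactly as the paper does through $S\boldsymbol{+}_t(b_1+b_2)=A\boldsymbol{+}_t b_2$. The only caveat, shared with the paper's one-line treatment of (1), is that the ``if'' direction of (1) implicitly requires $2^{t_0}>2$ (i.e.\ $t_0\ge 2$), which holds in every application since $t_0\ge 3$ there.
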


\begin{proof}

(1) The proofs are directly derived from  Proposition \ref{d_t} (I).

 (2) From  $D_t(S,A)\leq 1$ it follows that 
 \begin{equation}
 D_3(S[S,3], S[A,3])\leq 1, 
 \end{equation}
  and
\begin{align}
&D_3(S[S,3],y)\\
\leq& D_3(S[S,3], S[A,3])+D_3(S[A,3], y) \leq 2. 
\end{align}
By Proposition \ref{d_t} (I), it holds that  such a $b$ is unique. 

(3) 
It is easy to check that 
\begin{align}
S\boldsymbol{+}_t(b_1+b_2)&=(S\boldsymbol{+}_t b_1)\boldsymbol{+}_t b_2\\
&= A\boldsymbol{+}_t b_2.
\end{align}
So, we have
\begin{align}
S[S,3]\boldsymbol{+}_3 (b_1+b_2)&=S[A,3]\boldsymbol{+}_3 b_2\\
&=y.
\end{align}
Consequently, $b=b_1+b_2$, and the lemma is proved.
\end{proof}

Now we are ready to prove Theorem \ref{ERT}.

Due to $D_{l[S_i]}(S_i,A_{i})\leq 1$, by Proposition \ref{d_t} we have 
\begin{equation} \label{B1}
D_{3}\left( P[S_i,3], P[A_i,3]\right)\leq 1
\end{equation}
and
\begin{equation}\label{B2}
D_{3}\left( S[S_i,3], S[A_i,3]\right)\leq 1,
\end{equation}
$i=1,2,\cdots,k$. 
In the light of Eqs. (\ref{B1},\ref{B2}), we can check that there are $b_1,b_2,b_3\in\{\pm1,0\}$ satisfying
\begin{equation}\label{b1}
S[S_k, 3]\boldsymbol{+}_3 b_1= S[A_k,3],
\end{equation}
\begin{equation}\label{b2}
P[S_k, 3]\boldsymbol{+}_3 b_2= P[A_k,3]=S[A_{k-1},3],
\end{equation}
\begin{equation}\label{b3}
S[S_{k-1}, 3]\boldsymbol{+}_3 b_3= P[A_k,3]=S[A_{k-1},3].
\end{equation}


 By virtue of  Lemma \ref{CorrectStep} (2,3), we can further verify that 
\begin{equation}
S[S_{k-1}, 3]\boldsymbol{+}_3 (b_3-b_2)= P[S_k,3].
\end{equation}

Next we need to prove that
\begin{equation}
D_{l[S_{k-1}']} \left(S_{k-1}',A_{k-1}'\right)
=D_{l[S_{k}']}\left(S_{k}',A_k' \right),
\end{equation}
where 
\begin{equation}
S_{k-1}'=(S_{k-1}\boldsymbol{+}_{N_0}(b_3-b_2))\circ S[S_k,l[S_k]-3], 
\end{equation}
\begin{equation}
A_{k-1}'=A_{k-1}\circ S[A_k,l[A_k]-3], S_k'=S_k, A_k'=A_k.
\end{equation}.
 We notify that 
\begin{equation} \label{PSk}
S[S_{k-1}',3]=P[S_k,3].
\end{equation}

First, by virtue of Lemma \ref{CorrectStep} (1) and Eqs. (\ref{b1}), we have 
\begin{equation}\label{Dk}
D_{l[S_{k}']}\left(S_{k}',A_k' \right)=|b_1|.
\end{equation}

Then, also  by virtue of Lemma \ref{CorrectStep} (1) and Eqs. (\ref{B1},\ref{B2},\ref{b1},\ref{b2},\ref{b3},\ref{PSk}), we have

\begin{align}
&S_{k-1}'\boldsymbol{+}_{l[S_{k-1}']} b_1\\
=&P\left[\left(S_{k-1}\boldsymbol{+}_{N_0} (b_3-b_2)\right), N_0-3\right]\circ S_k \boldsymbol{+}_{l[S_{k-1}']} b_1\\
=&P\left[\left(S_{k-1}\boldsymbol{+}_{N_0} (b_3-b_2)\right), N_0-3\right]\circ A_k\\
=&P\left[\left(S_{k-1}\boldsymbol{+}_{N_0} (b_3-b_2)\right), N_0-3\right]\\
&\circ P[A_k,3]\circ S[A_k,l[A_k]-3]\\
=&(P\left[\left(S_{k-1}\boldsymbol{+}_{N_0} (b_3-b_2)\right), N_0-3\right]\\
&\circ  P[S_k,3]\boldsymbol{+}_{N_0}b_2) \circ S[A_k,l[A_k]-3]\\
=&\left(\left(S_{k-1}\boldsymbol{+}_{N_0} (b_3-b_2)\right)\boldsymbol{+}_{N_0}b_2\right)\\
&\circ S[A_k,l[A_k]-3]\\
=&\left(S_{k-1}\boldsymbol{+}_{N_0} b_3\right)\circ S[A_k,l[A_k]-3]\\
=&A_{k-1}\circ S[A_k,l[A_k]-3]\\
=&A_{k-1}',
\end{align}
where we use $S\left[\left(S_{k-1}\boldsymbol{+}_{N_0} (b_3-b_2)\right), 3\right]=P[S_k,3]$.

As a result, we have
\begin{equation}\label{Dk-1}
D_{l[S_{k-1}']}\left(S_{k-1}',A_{k-1}' \right)=|b_1|.
\end{equation}

Due to Eq. (\ref{Dk}, \ref{Dk-1}), we obtain 


\begin{align}
D_{l[S_{k-1}']}\left(S_{k-1}',A_{k-1}' \right)=|b_1|\\
=D_{l[S_{k}']}\left(S_{k}',A_{k}' \right).
\end{align}

By recursion, it can be similarly  proved that
\begin{align}
D_{l[S_{1}']}\left(S_{1}',A_{1}' \right)=|b_1|\\
=D_{l[S_{k}']}\left(S_{k}',A_{k}' \right),
\end{align}
where $S_1'=S'$ and $A_1'=\omega$.

Therefore, {\bf Theorem \ref{ERT}} has been proved.

\section{Application to Distributed Phase Estimation}\label{sec:application}

The error correction algorithm can be applied to designing a distributed phase estimation algorithm. 
Here,  phase estimation algorithm is from \cite{NC2000}, but we will reformulate it with some new notations.	We first recall quantum Fourier transform that is a unitary operator acting on the standard basis states:
\begin{equation}
QFT |j\rangle=\frac{1}{\sqrt{2^n}}\sum_{k=0}^{2^n-1}e^{2\pi ijk/2^n}|k\rangle\text{,}
\end{equation}
for $j=0,1,\cdots,2^n-1$. The inverse quantum Fourier transform is defined as:
\begin{equation}\label{inverse_QFT}
QFT^{-1} \frac{1}{\sqrt{2^n}}\sum_{k=0}^{2^n-1}e^{2\pi ijk/2^n}|k\rangle=|j\rangle\text{,}
\end{equation}
for $j=0,1,\cdots,2^n-1$.

	Phase estimation algorithm is a practical application of quantum Fourier transform. Let a unitary operator $U$ together with its eigenvector  $|u\rangle$ satisfy
\begin{equation}\label{eq:Uu1}
U|u\rangle=e^{2\pi i\omega}|u\rangle
\end{equation}
 for some real number $\omega\in[0,1)$,  where  $\omega=0.a_1a_2\cdots a_n\cdots$,  $a_i\in \{0,1\}$ for each $i$.  Suppose that the controlled operation $C_m(U)$ is defined as
 \begin{equation}\label{CmU}
C_m(U)|j \rangle|u\rangle=|j\rangle U^j|u\rangle
\end{equation}
for any positive integer $m$ and $m$-bit string $j$, where the first register is control qubits.  

\begin{remark}\label{remark:PEVariant}
Let $x$ be a natural number. By Eq. (\ref{eq:Uu1}), we have $U^{2^{x-1}}|u\rangle=e^{2\pi i(2^{x-1}\omega)}|u\rangle=e^{2\pi i  0.a_xa_{x+1}\cdots}|u\rangle$. Thus, to estimate $ 0.a_xa_{x+1}\cdots$, we can apply the phase estimation algorithm similarly and change $C_t(U)$  to $C_t(U^{2^{x-1}})$ accordingly.
\end{remark}


Phase estimation algorithm ({\bf Algorithm 2}) \cite{NC2000}  can be reformulated as follows. 



\textbf{Input}: Unitary operator $U$ together with its eigenvector  $|u\rangle$ satisfies $U|u\rangle=e^{2\pi i\omega}|u\rangle$, $n$, and $\epsilon\in (0,1)$.

\textbf{Output}: A $t$-bit string $\widetilde{\omega}$ satisfies:
 $D_n(P[\widetilde{\omega},n],P[\overline{\omega}, n] )\leq 1$,  $t=n+\lceil\log_2(2+\dfrac{1}{2\epsilon})\rceil$, where $\overline{\omega}=a_1a_2\cdots a_n\cdots$, if $ \omega=0.a_1a_2\cdots a_n\cdots$.

\begin{algorithmic}[1]
\State Create initial state $|0\rangle|u\rangle$: 
The first register is $t$-qubit.

\State Apply $H^{\otimes t}$ to the first register: 
  \quad  $\dfrac{1}{\sqrt{2^t}}\sum\limits_{j=0}^{2^t-1}|j\rangle|u\rangle$.
\State Apply $C_t(U)$: 
   \quad   $\dfrac{1}{\sqrt{2^t}}\sum\limits_{j=0}^{2^t-1}|j\rangle e^{2\pi ij\omega}|u\rangle$.  
\State Apply $QFT^{-1}$: 
 \quad $\dfrac{1}{2^t}\sum\limits_{j=0}^{2^t-1}\sum\limits_{k=0}^{2^t-1}e^{2\pi ij(\omega-k/2^t)}|k\rangle |u\rangle$. 
 
 
\State Measure the first register: 
 \quad obtain a $t$-bit string $\widetilde{\omega}$.
\end{algorithmic}

The above phase estimation algorithm is used to estimate $\omega$, which can be more accurately described by the following propositions.


	\begin{proposition}[See \cite{NC2000}]\label{PE}
	In {\bf Algorithm 2},  if $t=n+\lceil\log_2(2+\dfrac{1}{2\epsilon})\rceil$, then the probability of $D_t(\widetilde{\omega},P[\overline{\omega}, t] )<2^{t-n}$ is at least $1-\epsilon$. 
	\end{proposition}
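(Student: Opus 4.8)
The plan is to reproduce the standard accuracy analysis of the phase estimation algorithm from \cite{NC2000}, phrased in the cyclic metric $D_t$. First I would write $b=d[P[\overline{\omega},t]]=\lfloor 2^t\omega\rfloor$, so that the target string $P[\overline{\omega},t]$ is the $t$-bit encoding of the integer $b$, and set $\delta=\omega-b/2^t\in[0,2^{-t})$. After Step 4 of {\bf Algorithm 2} the first register holds $\frac{1}{2^t}\sum_{k=0}^{2^t-1}\left(\sum_{j=0}^{2^t-1}e^{2\pi ij(\omega-k/2^t)}\right)|k\rangle\otimes|u\rangle$, so the probability of the measurement outcome $k$ equals $|\alpha_k|^2$, where $\alpha_k=\frac{1}{2^t}\sum_{j=0}^{2^t-1}e^{2\pi ij(\delta-\ell/2^t)}$ and $\ell$ is the unique integer in $\{-2^{t-1}+1,\dots,2^{t-1}\}$ with $k\equiv b+\ell\pmod{2^t}$; note that then $D_t(k,b)=|\ell|$.

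Next I would evaluate the geometric sum. Since $\ell\in\mathbb{Z}$, one obtains $\alpha_k=\frac{1}{2^t}\cdot\frac{1-e^{2\pi i\,2^t\delta}}{1-e^{2\pi i(\delta-\ell/2^t)}}$ whenever $\delta-\ell/2^t\notin\mathbb{Z}$ (the excluded case is $\delta=\ell=0$, where $\alpha_k=1$ and the outcome is $b$). The numerator has modulus at most $2$, and over the chosen index range $|\delta-\ell/2^t|\le 1/2$, so Jordan's inequality $|1-e^{i\theta}|=2|\sin(\theta/2)|\ge 2|\theta|/\pi$ for $|\theta|\le\pi$ bounds the denominator below by $4|\delta-\ell/2^t|$. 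Hence $|\alpha_k|\le\frac{1}{2^{t+1}|\delta-\ell/2^t|}=\frac{1}{2|2^t\delta-\ell|}$ for $\ell\ne 0$.

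Then, for a positive integer $e$, I would bound the failure probability $p\big(D_t(\widetilde{\omega},b)>e\big)=\sum_{e<|\ell|\le 2^{t-1}}|\alpha_k|^2\le\frac14\sum_{e<|\ell|\le 2^{t-1}}\frac{1}{(2^t\delta-\ell)^2}$. Using $0\le 2^t\delta<1$: for $\ell\ge e+1$ we have $|2^t\delta-\ell|\ge \ell-1$, and for $\ell\le-(e+1)$ we have $|2^t\delta-\ell|\ge|\ell|$; comparing the resulting tail sums with an integral gives $\sum_{e<|\ell|\le 2^{t-1}}\frac{1}{(2^t\delta-\ell)^2}\le 2\sum_{m=e}^{\infty}\frac{1}{m^2}\le 2\int_{e-1}^{\infty}\frac{dm}{m^2}=\frac{2}{e-1}$, so $p\big(D_t(\widetilde{\omega},b)>e\big)\le\frac{1}{2(e-1)}$. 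Taking $e=2^{t-n}-1$ yields $p\big(D_t(\widetilde{\omega},P[\overline{\omega},t])\ge 2^{t-n}\big)\le\frac{1}{2(2^{t-n}-2)}$, and the hypothesis $t=n+\lceil\log_2(2+\frac{1}{2\epsilon})\rceil$ forces $2^{t-n}\ge 2+\frac{1}{2\epsilon}$, i.e. $2(2^{t-n}-2)\ge 1/\epsilon$; therefore the failure probability is at most $\epsilon$, which is the claim.

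The computations (the closed form of the geometric sum, Jordan's inequality, and the integral comparison) are routine. The one spot that needs genuine care is the cyclic bookkeeping: one must choose the representative $\ell$ in $\{-2^{t-1}+1,\dots,2^{t-1}\}$ so that simultaneously $D_t(k,b)=|\ell|$ and $|\delta-\ell/2^t|\le 1/2$ (so the lower bound on $|1-e^{i\theta}|$ is valid over the whole summation range), and one must not forget the degenerate case in which $\omega$ already terminates within $t$ binary digits, where $\delta=0$ and the outcome is deterministically $b$, so the estimate holds trivially.
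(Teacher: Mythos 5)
Your proposal is correct, and it is essentially the intended argument: the paper offers no proof of this proposition, citing Nielsen--Chuang, and your derivation is precisely that textbook analysis (geometric sum for the amplitude, the bound $|1-e^{i\theta}|\ge 2|\theta|/\pi$, the tail/integral comparison giving failure probability at most $\frac{1}{2(e-1)}$ with $e=2^{t-n}-1$, and the choice of $t$ forcing $2(2^{t-n}-2)\ge 1/\epsilon$). The only added content is your careful restatement in the cyclic metric $D_t$ — choosing the representative $\ell\in\{-2^{t-1}+1,\dots,2^{t-1}\}$ so that $D_t(k,b)=|\ell|$ and $|\delta-\ell/2^t|\le 1/2$, and handling the degenerate $\delta=0$ case — which is exactly the bookkeeping needed to match the paper's formulation, and it is done correctly.
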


Due to Proposition \ref{PE} and Proposition \ref{d_t} (III) we have the following result.


\begin{proposition}\label{PE2}
	In {\bf Algorithm 2},  if $t=n+\lceil\log_2(2+\dfrac{1}{2\epsilon})\rceil$, then the probability of $D_n(P[\widetilde{\omega},n],P[\overline{\omega}, n] )\leq 1$ is at least $1-\epsilon$. 
\end{proposition}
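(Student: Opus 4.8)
The plan is to derive Proposition~\ref{PE2} directly as a corollary of Proposition~\ref{PE} together with part~(III) of Proposition~\ref{d_t}, without revisiting the analysis of the phase estimation circuit itself. The observation is that the two propositions are almost the same statement, differing only in the precise quantitative conclusion, so the work is purely about chaining one inequality into another.

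First I would fix the event on which Proposition~\ref{PE} guarantees success, namely the event $E$ that the measured $t$-bit string $\widetilde{\omega}$ satisfies $D_t(\widetilde{\omega},P[\overline{\omega},t])<2^{t-n}$. By Proposition~\ref{PE}, $\Pr[E]\geq 1-\epsilon$. Next I would show that $E$ implies $D_n(P[\widetilde{\omega},n],P[\overline{\omega},n])\leq 1$: this is exactly an application of Proposition~\ref{d_t}~(III) with the ambient length being $t$, the truncation length being $n_0=n$, and the hypothesis $D_t(x,y)<2^{t-n_0}$ holding by virtue of event $E$ with $x=\widetilde{\omega}$ and $y=P[\overline{\omega},t]$. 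One small bookkeeping point to spell out is that the prefix of $P[\overline{\omega},t]$ with $n$ bits is the same string as $P[\overline{\omega},n]$ (prefixes nest), so that the conclusion $D_n(P[\widetilde{\omega},n],P[P[\overline{\omega},t],n])\leq 1$ delivered by Proposition~\ref{d_t}~(III) is literally the desired inequality. Since $E$ implies the target event, the target event has probability at least $\Pr[E]\geq 1-\epsilon$, which is the claim.

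I do not expect any genuine obstacle here; the proof is essentially a one-line implication $E\subseteq\{D_n(P[\widetilde{\omega},n],P[\overline{\omega},n])\leq 1\}$ followed by monotonicity of probability. The only place warranting a moment of care is checking the strict-versus-nonstrict inequality and the exponent: Proposition~\ref{d_t}~(III) requires the \emph{strict} bound $D_t<2^{t-n}$, and that is exactly what Proposition~\ref{PE} provides (also strict), so the hypotheses line up without any loss. If one wanted to be maximally explicit one could also note that $t>n$ (since $\lceil\log_2(2+\tfrac{1}{2\epsilon})\rceil\geq 2$ for $\epsilon\in(0,1)$), so that $1\leq n<t$ and the side condition $1\leq n_0<n$ of part~(III)—read with $n\mapsto t$, $n_0\mapsto n$—is satisfied. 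With these checks in place the proposition follows immediately.
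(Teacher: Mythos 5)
Your proof is correct and matches the paper's own justification, which simply notes that Proposition~\ref{PE2} follows from Proposition~\ref{PE} combined with Proposition~\ref{d_t}~(III) applied with ambient length $t$ and truncation length $n$. Your additional checks (nesting of prefixes, strict inequality alignment, and $t>n$) are sound and only make explicit what the paper leaves implicit.
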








Proposition \ref{PE2} implies that it  requires $\lceil\log_2(2+\dfrac{1}{2\epsilon})\rceil$ additional qubits for estimating the first $n$ bits of $\omega$ with success probability at least $1-\epsilon$ and with deviation of error no larger than $1$. 

As reviewed  above,	phase estimation algorithm can estimate  $P[\overline{\omega}, n]=a_1a_2\cdots a_n$ of $\omega$ in Eq. (\ref{eq:Uu1}) for given $n$. Then we apply {\bf Error Correction Algorithm} to designing a distributed phase estimation algorithm. 	If we design $k$ computing nodes, then   $\omega=a_1a_2\ldots a_n\in\{0,1\}^n$ is divided into $k$ substrings,  say $A_1,A_2,\ldots,A_k$,  and for a given $k_0$ (the length of each substring $A_i$ is not smaller than $k_0$),  
 $S[A_i, k_0]=P[A_{i+1},k_0]$,
 $i=1,2,\ldots,k-1$. 
 
 In the interest of simplicity, and also without loss of generality, we take $k_0=3$, all substrings $A_i$ ($i=1,2,\ldots,k-1$)   have the same length $N_0\geq 3$. 
  It is easy to see that the subscript of first bit of $A_i$ is $(i-1)N_0-3(i-1)+1$, denoted by $l_i$ for short.

We outline the basic idea of our distributed phase erstimation algorithm. $k$ computing nodes are denoted as $Q_1,\cdots,Q_k$ to estimate  $A_1,A_2,\ldots, A_k$, respectively. In fact, we also employ phase estimation algorithm for each computing node by adjusting some parameters appropriately. If $A_i$ ($i=1,2,\ldots,k-1$) is estimated, then $n$, $t$, and $C_{t}(U) $ are 
   replaced by $N_0$,  $t_i=N_0+\lceil\log_2(2+\dfrac{k}{2\epsilon})\rceil$, and $C_{t_i}(U^{2^{l_i-1}})$, respectively. For estimating $A_k$, we need to use the length of $A_k$ to replace $N_0$.

    For each computing node, measurement is performed to the first $t_i$ bits and then its prefix with length $N_0$ denoted by $S_i$ is achieved as the estimation of $A_i$.  Finally, we apply error correction algorithm to these $k$ estimation values $S_1,S_2,\cdots, S_k$ and obtain an estimation  of phase $\omega$.
   
   So, we give a distributed phase estimation algorithm ({\bf Algorithm 3})  as follows.
   
   \textbf{Input}:  Unitary operator $U$ together with its eigenvector  $|u\rangle$ satisfies $U|u\rangle=e^{2\pi i\omega}|u\rangle$, $\epsilon\in(0,1)$;  $n$, $N_0\geq 3$,  and $k$ satisfy $3\leq n-(k-1)(N_0-3)=l[A_k]\leq N_0$; $l_i=(i-1)N_0-3(i-1)+1$.
   
\textbf{Output}: An $n$-bit string $S'$ such that $D_n(S',P[\overline{\omega}, n])\leq 1$ with success probability at least $1-\epsilon$.

Nodes $Q_1,Q_2,\cdots,Q_k$ perform the following operations in parallel.
\quad\textbf{ Node $Q_i$ excute ($i=1,2,\cdots,k$)}:
\begin{algorithmic}[1]

\State Create initial state $|0\rangle_{R_i}|u\rangle$:
Register $R_i$ is $t_i$-qubit, where $t_i=N_0+\lceil\log_2(2+\dfrac{k}{2\epsilon})\rceil$ for $i=1,2,\cdots,k-1$, and $t_k=l(A_k)+\lceil\log_2(2+\dfrac{k}{2\epsilon})\rceil$.
\State Apply $H^{\otimes t_i}$ to the first register:
  \quad. $\dfrac{1}{\sqrt{2^{t_i}}}\sum\limits_{j=0}^{2^{t_i}-1}|j\rangle_{R_i}|u\rangle$
  \State Apply $C_{t_i}(U^{2^{l_i-1}})$: 
\quad 
$\dfrac{1}{\sqrt{2^{t_i}}}\sum\limits_{j=0}^{2^{t_i}-1}|j\rangle_{R_i} e^{2\pi ij0.a_{l_i}a_{l_i+1}\cdots }|u\rangle$.
\State Apply $QFT^{-1}$:

   $\dfrac{1}{2^{t_i}}\sum\limits_{j=0}^{2^{t_i}-1}\sum\limits_{k=0}^{2^{t_i}-1}e^{2\pi ij(0.a_{l_i}a_{l_i+1}\cdots-k/2^t)}|k\rangle |u\rangle.$ 
 \State Measure the  first register $R_i$ and obtain a $t_i$ bits string: 
 denote  its prefix with length $N_0$ as $S_i$ for $i=1,2,\cdots,k-1$, and  its prefix with length $l[A_k]$ as $S_k$.
 \State  \textbf{Execute ``Error Correction Algorithm" with inputting $S_1,\cdots,S_k$, and output an $n$-bit string $S'$.}
\State Return $S'$.
   \end{algorithmic}

  By means of Proposition  \ref{PE2}, we have the following corollary straightforward.
\begin{corollary}\label{DPE2}
	In {\bf Algorithm 3},  
	the probability of $D_{l[A_i]}(S_i,A_i)\leq 1$ is at least $1-\epsilon$,  $i=1,2,\cdots,k.$
	\end{corollary}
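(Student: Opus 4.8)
The plan is to observe that each node $Q_i$ in {\bf Algorithm 3} is nothing but an instance of the phase estimation algorithm ({\bf Algorithm 2}) with rescaled parameters, and then to quote Proposition \ref{PE2} directly. Concretely, for $i=1,2,\ldots,k-1$, the operations of Node $Q_i$ (Steps 1--5) coincide with Steps 1--5 of {\bf Algorithm 2} after replacing $n$ by $N_0$, replacing $t$ by $t_i=N_0+\lceil\log_2(2+\frac{k}{2\epsilon})\rceil$, and replacing $C_t(U)$ by $C_{t_i}(U^{2^{l_i-1}})$; for $i=k$ the same holds with $N_0$ replaced by $l[A_k]$ and $t_k=l[A_k]+\lceil\log_2(2+\frac{k}{2\epsilon})\rceil$.

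First I would use Remark \ref{remark:PEVariant} to pin down what Node $Q_i$ actually estimates: since $U^{2^{l_i-1}}|u\rangle=e^{2\pi i\,0.a_{l_i}a_{l_i+1}\cdots}|u\rangle$, feeding $C_{t_i}(U^{2^{l_i-1}})$ into the phase estimation routine makes Node $Q_i$ estimate the real number $\omega_i:=0.a_{l_i}a_{l_i+1}\cdots$, whose associated bit string is $\overline{\omega_i}:=a_{l_i}a_{l_i+1}\cdots$. From $l_i=(i-1)(N_0-3)+1$ and the way $\omega$ is partitioned, the length-$N_0$ prefix of $\overline{\omega_i}$ is exactly $A_i$ for $i\le k-1$, and the length-$l[A_k]$ prefix of $\overline{\omega_k}$ is exactly $A_k$. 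Since $S_i$ is defined in Step 5 as the length-$N_0$ (resp. length-$l[A_k]$) prefix of the measured string, in the notation of {\bf Algorithm 2} we have $S_i=P[\widetilde{\omega}_i,N_0]$ and $A_i=P[\overline{\omega_i},N_0]$ (resp. with $l[A_k]$ in place of $N_0$), where $\widetilde{\omega}_i$ denotes the $t_i$-bit string measured by Node $Q_i$.

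Next I would apply Proposition \ref{PE2} with $n\mapsto N_0$ (resp. $l[A_k]$) and with precision parameter $\epsilon\mapsto\epsilon/k$; this is legitimate because $t_i=N_0+\lceil\log_2(2+\frac{k}{2\epsilon})\rceil=N_0+\lceil\log_2(2+\frac{1}{2(\epsilon/k)})\rceil$ matches the hypothesis of that proposition, and $\epsilon/k\in(0,1)$. Its conclusion then says that the probability of $D_{N_0}\bigl(P[\widetilde{\omega}_i,N_0],P[\overline{\omega_i},N_0]\bigr)\le 1$, i.e.\ of $D_{l[A_i]}(S_i,A_i)\le 1$, is at least $1-\epsilon/k$. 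Since $k\ge 1$ gives $1-\epsilon/k\ge 1-\epsilon$, the claimed bound follows for every $i=1,2,\ldots,k$.

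There is no real obstacle here — as the text says, the result is "straightforward" — and the only thing needing care is bookkeeping: matching the substituted parameters against the hypothesis of Proposition \ref{PE2}, and checking that the length-$N_0$ prefix extracted in Step 5 of Node $Q_i$ is precisely the object $P[\widetilde{\omega},n]$ appearing in Proposition \ref{PE2}, with the shifted phase $0.a_{l_i}a_{l_i+1}\cdots$ playing the role of $\omega$. I note in passing that the argument actually yields the stronger per-node bound $1-\epsilon/k$, which is what later allows a union bound over the $k$ nodes to justify the overall success probability $1-\epsilon$ of {\bf Algorithm 3}, although that refinement is not needed for the present corollary.
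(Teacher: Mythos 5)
Your proposal is correct and takes essentially the same route as the paper, which simply states that the corollary follows ``straightforward'' from Proposition \ref{PE2}; your write-up just fills in the intended bookkeeping (each node is Algorithm 2 with $n\mapsto N_0$ or $l[A_k]$, the shifted phase $0.a_{l_i}a_{l_i+1}\cdots$ via Remark \ref{remark:PEVariant}, and $\epsilon\mapsto\epsilon/k$ since $t_i=N_0+\lceil\log_2(2+\frac{1}{2(\epsilon/k)})\rceil$), giving the per-node bound $1-\epsilon/k\geq 1-\epsilon$. Your closing observation that the stronger $1-\epsilon/k$ bound is what a union bound over all $k$ nodes actually needs for Theorem \ref{ECDPE} is a correct and worthwhile remark, since Theorem \ref{ERT} requires all nodes to succeed simultaneously.
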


	Due to Theorem \ref{ERT}, we have the following result.

\begin{theorem}\label{ECDPE}
In Algorithm 3, the probability of 
\begin{equation}
D_n(S',P[\overline{\omega}, n])\leq 1
\end{equation}
is at least $1-\epsilon$. 
\end{theorem}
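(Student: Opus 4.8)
The plan is to reduce Theorem~\ref{ECDPE} to the combination of two facts already established: the per‑node estimation guarantee (Corollary~\ref{DPE2}) and the deterministic correctness of the error correction procedure (Theorem~\ref{ERT}). The key observation is that \textbf{Algorithm 3} runs the $k$ phase‑estimation subroutines \emph{in parallel}, each with its own accuracy parameter chosen so that the individual failure probability is at most $\epsilon/k$ rather than $\epsilon$; this is exactly why the extra register size is $\lceil\log_2(2+\frac{k}{2\epsilon})\rceil$ instead of $\lceil\log_2(2+\frac{1}{2\epsilon})\rceil$. So first I would make explicit what ``success'' means: it is the event $E=\bigcap_{i=1}^{k} E_i$, where $E_i$ is the event that node $Q_i$ outputs $S_i$ with $D_{l[A_i]}(S_i,A_i)\le 1$. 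By Corollary~\ref{DPE2} (which in turn rests on Proposition~\ref{PE2} with $\epsilon$ replaced by $\epsilon/k$), each $\Pr[\overline{E_i}]\le \epsilon/k$.

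Second, I would bound $\Pr[\overline{E}]$ by a union bound: $\Pr[\overline{E}]=\Pr\!\big[\bigcup_i \overline{E_i}\big]\le \sum_{i=1}^{k}\Pr[\overline{E_i}]\le k\cdot\frac{\epsilon}{k}=\epsilon$, hence $\Pr[E]\ge 1-\epsilon$. Here I should note (or assume, as is implicit in the parallel formulation) that the $k$ nodes act on independent registers, so the events $E_i$ are determined by disjoint pieces of randomness; but in fact the union bound needs no independence at all, only subadditivity of probability, so this step is clean regardless.

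Third, conditioned on $E$, I would invoke Theorem~\ref{ERT} directly: its hypotheses are precisely $l[A_i]=N_0\ge 3$ for $i<k$, the overlap condition $S[A_i,3]=P[A_{i+1},3]$ (which holds by construction of the $A_i$ in \textbf{Algorithm 3}, with $l_i$ chosen so the suffixes and prefixes coincide), and $D_{l[A_i]}(S_i,A_i)\le 1$ for all $i$ (which is exactly the event $E$). The theorem then guarantees that \textbf{Algorithm 1}, invoked in line~6 of \textbf{Algorithm 3}, outputs $S'$ with $D_n(S',\omega)=D_{l[S_k]}(S_k,A_k)\le 1$; since $\omega$ here denotes the relevant bit string $P[\overline{\omega},n]=a_1a_2\cdots a_n$, this is the desired inequality. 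Combining, $\Pr[D_n(S',P[\overline{\omega},n])\le 1]\ge \Pr[E]\ge 1-\epsilon$.

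The only genuine subtlety — and the step I would treat most carefully — is verifying that Corollary~\ref{DPE2} really does deliver failure probability $\epsilon/k$ per node and that the overlapping decomposition used in \textbf{Algorithm 3} satisfies the exact hypotheses of Theorem~\ref{ERT} (in particular that node $Q_i$, by applying $C_{t_i}(U^{2^{l_i-1}})$, is estimating a string whose first $N_0$ bits are $A_i$ and whose overlap with $A_{i+1}$ is the prescribed $3$ bits). Both of these are bookkeeping rather than deep, but they are where an error could hide; everything else is a union bound plus a citation of the already‑proved deterministic theorem.
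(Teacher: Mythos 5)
Your proposal is correct and takes essentially the same route as the paper: reduce the claim to Theorem \ref{ERT} (which makes $D_n(S',P[\overline{\omega},n])=D_{l[S_k]}(S_k,A_k)$ once every node's estimate is within distance $1$) together with the per-node guarantee underlying Corollary \ref{DPE2}. You are in fact more explicit than the paper's two-line proof, which leaves the union bound over the $k$ nodes (per-node failure $\epsilon/k$, the reason for the factor $k$ inside $t_i=N_0+\lceil\log_2(2+\frac{k}{2\epsilon})\rceil$) implicit in the statement of Corollary \ref{DPE2}.
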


\begin{proof}
According to  Theorem \ref{ERT},  we know that 
\begin{equation}
D_n(S',P[\overline{\omega}, n])=D_{l(S_k)}(S_k,A_k).
\end{equation}
By combining Corollary \ref{DPE2}, the theorem follows.

\end{proof}

	Finally, we analyze the complexity of the above distributed phase estimation algorithm. As we know, in phase estimation algorithm,  the main operator $C_t(U)$   can be implemented by $t$ controlled operators in the form of controlled-$U^{2^x} $\cite{NC2000}, $x=0,1,2,3,4,\cdots,t-1$. Therefore,  the number of  controlled-$U^{2^x}$ gates is taken as a metric in the complexity analysis. 
	
	In {\bf Algorithm 3}, the qubits  and the number of controlled-$U^{2^x}$ of per node are $\dfrac{n}{k}+\log_2k+N(|u\rangle)+O(1)$ and  $\dfrac{n}{k}+\log_2k+O(1)$, respectively, where $N(|u\rangle)$ denotes the number of  qubit of $|u\rangle$.	 Our distributed phase estimation algorithm does not require quantum communication. Compared with the centralized phase estimation algorithm, the maximum number of qubits required by a single computing node ($Q_i$) in our distributed algorithm is reduced by $(1-\dfrac{1}{k})n-\log_2k-O(1)$.

\section{Conclusions} \label{sec:conclusions}

 In this paper, we have proposed a universal method  of error correction for a kind of distributed quantum computing, and then we have applied this method to designing a distributed phase estimation algorithm.  In general, if the solution of a problem can be represented as a bit string, and there are multiple computing nodes to obtain respective substrings approximately, then  the error correction scheme in this paper can be used to achieve an approximate solution efficiently.
 
 Phase estimation  is a basic algorithm that can be used to design other quantum algorithms, so naturally,  the distributed phase estimation algorithm in this paper can  be  used to design other distributed quantum algorithms, for example, distributed order-finding algorithm, distributed factoring algorithm, distributed discrete logarithm algorithm, and distributed HHL algorithm  et al \cite{NC2000,Shor1994,HHL2009}.

\newpage

\appendix

\end{document}